\newtheorem{lemma}{Lemma}
\newcommand{\algdef}[1]{\textsc{#1}\index{algorytm!\textsc{#1}}\label{algdef:#1}}
\newcounter{lnoc}
\newenvironment{algorithm}[1]{%
\hrule height 0.8pt \vspace{0.6ex} \small#1\vspace{0.6ex}\hrule height 0.5pt \vspace{-2.5ex}
\setcounter{lnoc}{0}
\small
\begin{tabbing}
00000\=XXI\=XXI\=XXI\=XXI\=XXI\=XXI\=\kill
}{%
\end{tabbing}
\vspace{-2.5ex}\hrule height 0.8pt\vspace{1ex}}
\newcommand{\lno}[1][0]{{\footnotesize\sffamily 
\ifnum#1=0
\stepcounter{lnoc} 
\ifnum\thelnoc<10
\phantom0%
\fi
\thelnoc
\else
\thelnoc.#1
\fi
}\>}
\newcommand{\pcfor}{{\bfseries for~}}
\newcommand{\pcto}{{\bfseries to~}}
\newcommand{\pcdownto}{{\bfseries downto~}}
\newcommand{\pcdo}{{\bfseries do~}}
\newcommand{\pcif}{{\bfseries if~}}
\newcommand{\pcthen}{{\bfseries then~}}
\newcommand{\pcelse}{{\bfseries else~}}
\newcommand{\pcand}{{\bfseries and~}}
\newcommand{\pcreturn}{{\bfseries return~}}
\newcommand{\pccomment}[1]{\phantom{000}\{\textit{#1}\}}
\title{Quadratic-time Algorithm\\for the String Constrained LCS Problem}
\author{Sebastian Deorowicz}
\begin{document}
\maketitle

\begin{abstract}
The problem of finding a longest common subsequence of two main sequences with some constraint that must be a substring of the result (STR-IC-LCS) was formulated recently.
It is a variant of the constrained longest common subsequence problem.
As the known algorithms for the STR-IC-LCS problem are cubic-time, the presented quadratic-time algorithm is significantly faster.
\end{abstract}

{\bfseries Keywords}: sequence similarity, longest common subsequence, constrained longest common subsequence

\section{Introduction}

One of the most popular ways of measuring sequence similarity is computation of their longest common subsequence (LCS)~\cite{Gus1997}, in which we are interested in a subsequence that is common to all sequences and has the maximal possible length.
It is well known that for two sequences of length $n$ and $m$ an LCS can be found in $O(nm)$ time, which is a lower bound of time complexity in the comparison-based computing model for this problem~\cite{AHU1976}.
In the more practical, RAM model of computations, the asymptotically fastest algorithm is the one by Masek and Paterson which runs in $O(nm/\log n)$ time for bounded and $O(mn \log\log n / \log n)$ for unbounded alphabet~\cite{MP1980}.

One family of LCS-related problems considers one or more {\em constraining} sequences, such that (in some variants) must be included, or (in other problem variants) are forbidden as part of the resulting sequence~\cite{CC2011,Tsa2003}.
The motivation for these generalizations came from bioinformatics in which some prior knowledge is often available and one can specify some requirements on the result~\cite{Tsa2003,Deo2010a}.

In this work, we consider the problem called STR-IC-LCS, introduced in~\cite{CC2011}, in which a constraining sequence of length~$r$ must be included as a substring of a common subsequence of two main sequences and the length of the result must be maximal.
In~\cite{CC2011} an $O(nmr)$-time algorithm was given for it.
Farhana et al.~\cite{FFMR2010} proposed finite-automata-based algorithms for the STR-IC-LCS, CLCS, and two other problems defined by Chen and Chao~\cite{CC2011}.
The authors claim that the algorithms work in $O(r(n+m)+ (n+m)\log (n+m))$ time in the worst case.
It seems to be a breakthrough as it means also that the LCS problem could be solved in $O(n\log n)$ time.
Unfortunately, the time complexity analysis are based on the claim from~\cite{BY1991} that a directed acyclic subsequence graph (DASG) for two sequences of lengths $n$ and $m$ contains $O(n + m)$ states and can be built in $O((n+m)\log(n+m))$ time.
As was shown by Crochemore et al.~\cite{CMT2003} this result was wrong and such a DASG contains $\Omega(nm)$ states in the worst case, so its construction time cannot be lower.
Thus, the algorithms by Farhana et al.~\cite{FFMR2010} work in $\Omega(nmr)$ time for the variants of the CLCS problem and $\Omega(nm)$ for the LCS problem.
Moreover, these complexities are under assumption that the alphabet size is constant, otherwise they should be multiplied by its size.


In this paper, we propose the first quadratic-time algorithm for the STR-IC-LCS problem and show also further possible improvements of the time complexity.
We also present how this algorithm can be extended to many main input sequences.

The paper is organized as follows.
In Section~\ref{sec:defs}, some definitions are given and the problem is formally stated.
Section~\ref{sec:alg} describes our algorithm.
Extension to the case of many main sequences and some improvements of the algorithm are given in Section~\ref{sec:impr}.
The last section concludes.

\section{Definitions}
\label{sec:defs}
Let us have two main sequences $A = a_1 a_2 \ldots a_n$ and $B = b_1 b_2 \ldots b_m$ and one constraining sequence $P = p_1 p_2 \ldots p_r$.
W.l.o.g.\ we can assume that $r \le m \le n$.
Each sequence is composed of symbols from \emph{alphabet} $\Sigma$ of size $\sigma$.
The \emph{length} (or \emph{size}) of any sequence~$X$ is the number of elements it is composed of and is denoted as $|X|$.
A sequence~$X^\star$ is a subsequence of $X$ if it can be obtained from $X$ by removing zero or more symbols.
The LCS problem for $A$ and $B$ is to find a subsequence~$C$ of both $A$ and~$B$ of the maximal possible length.
The LCS length for $A$ and~$B$ is denoted by~$\mathit{LLCS}(A, B)$.
A sequence $\beta$ is a \emph{substring} of $X$ if $X = \alpha \beta \gamma$ for some, possibly empty, sequences $\alpha$, $\beta$, $\gamma$.
An \emph{appearance} of sequence $X = x_1 x_2\ldots x_{|X|}$ in sequence $Y = y_1 y_2 \ldots y_{|Y|}$ starting at position $j$ is a sequence of indexes $i_1, i_2, \ldots, i_{|X|}$ such that $i_1 = j$, and $X = y_{i_1} \ldots y_{i_{|X|}}$.
A \emph{compact} appearance of a sequence $X$ in $Y$ starting at position $j$ is the appearance of the smallest last index, $i_{|X|}$.
A \emph{match} for sequences $A$ and $B$ is a pair $(i, j)$ such that $a_i = b_j$.
The total number of matches for $A$ and $B$ is denoted by~$d$.
It is obvious that $d \le mn$.

The STR-IC-LCS problem for the main sequences $A$, $B$, and the constraining sequence~$P$ is to find a subsequence $C$ of both $A$ and~$B$ of the maximal possible length containing $P$ as its substring.
(In the CLCS problem, $C$ must be a subsequence of $P$.)

\section{The algorithm}
\label{sec:alg}
The algorithm we propose is based on dynamic programming with some preprocessing.
To show its correctness it is necessary to prove some lemma.

Let $C = c_1 c_2 \ldots c_\ell$ be a longest common subsequence with substring constraint for $A$, $B$, and $P$.
Let also $I = (i_1, j_1), (i_2, j_2), \ldots, (i_\ell, j_\ell)$ be a sequence of indexes of $C$ symbols in $A$ and~$B$, i.e., $C = a_{i_1} a_{i_2} \ldots a_{i_\ell}$ and $C = b_{j_1} b_{j_2} \ldots b_{j_\ell}$.
From the problem statement, there must exists such $q \in [1, \ell-r+1]$ that $P = a_{i_q} a_{i_{q+1}} \ldots a_{i_{q+r-1}}$ and  $P = b_{j_q} b_{j_{q+1}} \ldots b_{j_{q+r-1}}$.

\begin{lemma}\label{lem:i-prime}
Let $i^\prime_q = i_q$ and for all $t \in [1, r-1]$, $i^\prime_{q+t}$ be the smallest possible, but larger than $i^\prime_{q+t-1}$, index in $A$ such that $a_{i_{q+t}} = a_{i^\prime_{q+t}}$
The sequence of indexes $I^\prime = (i_1, j_1), \ldots, (i_{q-1}, j_{q-1}),\allowbreak{} (i^\prime_q, j_q), (i^\prime_{q+1}, j_{q+1}), \ldots, (i^\prime_{q+r-1}, j_{q+r-1}), (i_{q+r}, j_{q+r}), \ldots, (i_\ell, j_\ell)$ defines a longest common subsequence of~$A$ and~$B$ with string constraint~$P$ equal~$C$.
\end{lemma}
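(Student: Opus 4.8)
The plan is to show that the greedy leftward replacement of the $A$-indices inside the constrained block neither changes the string that is read off nor destroys the property of being a witness for a common subsequence, so that $I^\prime$ is again an optimal solution spelling out exactly $C$.

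First I would dispose of the easy half. By construction $a_{i^\prime_{q+t}} = a_{i_{q+t}}$ for every $t \in [0, r-1]$, so the symbol at each position of the constrained block is unchanged; together with the untouched prefix pairs $(i_1,j_1),\ldots,(i_{q-1},j_{q-1})$ and suffix pairs $(i_{q+r},j_{q+r}),\ldots,(i_\ell,j_\ell)$, the string spelled out by $I^\prime$ is letter-for-letter the same $C$ spelled by $I$. In particular $I^\prime$ has length $\ell$ and still contains $P$ as the substring occupying positions $q$ through $q+r-1$. Moreover the $B$-indices are identical in $I$ and $I^\prime$, hence still strictly increasing, and every match condition survives because $a_{i^\prime_{q+t}} = a_{i_{q+t}} = b_{j_{q+t}}$. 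Thus the only thing left to check is that the $A$-indices of $I^\prime$ form a strictly increasing sequence.

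The heart of the argument is a monotonicity claim: $i^\prime_{q+t} \le i_{q+t}$ for all $t \in [0, r-1]$, which I would prove by induction on $t$. The base case is the equality $i^\prime_q = i_q$. For the inductive step, assume $i^\prime_{q+t-1} \le i_{q+t-1}$; then the original index $i_{q+t}$ obeys both requirements in the definition of $i^\prime_{q+t}$ — it carries the right symbol (trivially $a_{i_{q+t}} = a_{i_{q+t}}$) and it exceeds $i^\prime_{q+t-1}$, since $i_{q+t} > i_{q+t-1} \ge i^\prime_{q+t-1}$ — so $i_{q+t}$ is a legal candidate, and as $i^\prime_{q+t}$ is the smallest such candidate we get $i^\prime_{q+t} \le i_{q+t}$.

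With this claim in hand the remaining increments follow. The indices inside the block, $i^\prime_q < i^\prime_{q+1} < \cdots < i^\prime_{q+r-1}$, increase strictly by the very definition of the greedy choice. On the left, $i^\prime_q = i_q > i_{q-1}$ glues the block to the unchanged prefix. On the right — the one point I expect to be the real obstacle — I must rule out the greedy indices overshooting the suffix; this is exactly where monotonicity pays off, since $i^\prime_{q+r-1} \le i_{q+r-1} < i_{q+r}$. Hence the whole $A$-index sequence of $I^\prime$ is strictly increasing, so $I^\prime$ is a valid common-subsequence witness of length $\ell$ containing $P$ as a substring, i.e.\ a longest common subsequence of $A$ and $B$ with string constraint $P$, and it equals $C$.
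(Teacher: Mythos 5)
Your proof is correct and follows essentially the same route as the paper's: the paper asserts without detail that the new indices increase and that $i^\prime_{q+r-1} \le i_{q+r-1}$, which is exactly the $t = r-1$ case of the monotonicity claim you prove by induction. You have simply supplied the details the paper treats as obvious.
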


\begin{proof}
From the definition of indexes $i^\prime_{q+t}$ it is obvious that they form an increasing sequence, since $i^\prime_q = i_q$, and $i^\prime_{q+r-1} \le i_{q+r-1}$.
The sequence $i^\prime_{q}, \ldots, i^\prime_{q+r-1}$ is of course a compact appearance of $P$ in~$A$ starting at $i_q$.
Therefore, both components of $I^\prime$ pairs form increasing sequences and for any $(i^\prime_u, j^\prime_u)$, $a_{i^\prime_u} = b_{j^\prime_u}$, so sequence $I^\prime$ defines an STR-IC-LCS $C^\prime$ equal~$C$.
\end{proof}

A similar lemma can be formulated for $j$-th component of sequence~$I$.
Thus, it is easy to conclude that when looking for an STR-IC-LCS, instead of checking any common subsequences of $A$ and~$B$ it suffices to check only such common subsequences that contain compact appearances of $P$ both in $A$ and~$B$.
(This is a direct consequence of the fact that $\mathit{LLCS}(X, Y) \le \mathit{LLCS}(X, \alpha Y)$ for any sequence~$\alpha$.)

The number of different compact appearances of $P$ in $A$ and $B$ will be denoted by $d^\text{A}$ and~$d^\text{B}$, respectively.
It is easy to notice that $d^\text{A} d^\text{B} \le d$, since a pair $(i, j)$ defines a compact appearance of $P$ in $A$ starting at $i$-th position and compact appearance of $P$ in $B$ starting at $j$-th position only for some matches.

The algorithm computing an STR-IC-LCS (Fig.~\ref{psc:alg}) consists of three main stages.
In the first stage, both main sequences are preprocessed to determine for each occurrence of the first symbol of $P$, the index of the last symbol of a~compact appearance of~$P$.
In the second stage, two DP matrices are computed: the forward one and the reverse one.
The recurrence is exactly as for the LCS computation.

\begin{figure}[t]
\begin{algorithm}{\algdef{STR-IC-LCS}($A$, $B$, $P$)}
\pccomment{Preprocessing}\\
\lno	\pcfor $i \gets 1$ \pcto $n$ \pcdo\\
\lno\>	\pcif $a_i = p_1$ \pcthen
				$M^\text{A}[i] \gets \text{ smallest $q$ such that $p_1\ldots p_r$ is a subsequence of $a_i \ldots a_q$}$\\
\lno	\pcfor $j \gets 1$ \pcto $m$ \pcdo\\
\lno\>	\pcif $b_i = p_1$ \pcthen
				$M^\text{B}[j] \gets \text{ smallest $q$ such that $p_1\ldots p_r$ is a subsequence of $b_j \ldots b_q$}$\\
\pccomment{Computation of forward and reverse DP matrices}\\
\lno	\pcfor $i \gets 0$ \pcto $n+1$ \pcdo
			$F[i, 0] \gets 0$; $R[i, m+1] \gets 0$\\
\lno	\pcfor $j \gets 0$ \pcto $m+1$ \pcdo
			$F[0, j] \gets 0$; $R[n+1, j] \gets 0$\\
\lno	\pcfor $i \gets 1$ \pcto $n$ \pcdo\\
\lno\>	\pcfor $j \gets 1$ \pcto $m$ \pcdo\\
\lno\>\>		\pcif $a_i = b_j$ \pcthen
					$F[i,j] = F[i-1,j-1] + 1$\\
\lno\>\>		\pcelse
					$F[i,j] = \max(F[i-1,j], F[i,j-1])$\\
\lno	\pcfor $i \gets n$ \pcdownto $1$ \pcdo\\
\lno\>	\pcfor $j \gets m$ \pcdownto $1$ \pcdo\\
\lno\>\>		\pcif $a_i = b_j$ \pcthen
					$R[i,j] = R[i+1,j+1] + 1$\\
\lno\>\>		\pcelse
					$F[i,j] = \max(R[i+1,j], R[i,j+1])$\\
\pccomment{Determination of the result}\\
\lno	$\ell \gets 0$; $i^\star \gets 0$; $j^\star \gets 0$\\
\lno	\pcfor $i \gets 1$ \pcto $n$ \pcdo\\
\lno\>	\pcfor $j \gets 1$ \pcto $m$ \pcdo\\
\lno\>\>		\pcif $a_i = b_j$ \pcand
					$F[i,j] + R[M^\text{A}[i], M^\text{B}[j]] + r - 2 > \ell$ \pcthen\\
\lno\>\>\>		$\ell \gets F[i,j] + R[M^\text{A}[i], M^\text{B}[j]] + r - 2$\\
\lno\>\>\>		$i^\star \gets i$; $j^\star \gets j$\\
\lno	Backtrack from $(i^\star, j^\star)$ according to $F$ and obtain $S^1$\\
\lno	Backtrack from $(M^\text{A}[i^\star], M^\text{b}[j^\star])$ according to $R$ and obtain $S^2$\\
\lno	\pcreturn $\ell$ and $S^1 p_2 p_3\ldots p_{r-1} S^2$
\end{algorithm}
\caption{A pseudocode of the STR-IC-LCS computing algorithm for two main sequences and one constraining sequence}
\label{psc:alg}
\end{figure}

In the last stage, the result is determined.
To this end for each match $(i, j)$ for $A$ and $B$ the ends $(i^\prime, j^\prime)$ of compact appearances of $P$ in $A$ starting at $i$-th position and in $B$ starting at $j$-th position are read.
The length of an STR-IC-LCS containing these appearances of $P$ is determined as a sum of the LCS length of prefixes of $A$ and $B$ ending at $i$-th and $j$-th positions, respectively, the LCS length of suffixes of $A$ and $B$ starting at $i^\prime$-th and $j^\prime$-th positions, respectively, and the constraint length.
Since, the first and last constraint symbol was summed twice, the final result is decreased by~2.
According to the $F$ and $R$ matrices, backtracking can be used to obtain the subsequence, not only its length.

\begin{lemma}
The STR-IC-LCS algorithm (Fig.~\ref{psc:alg}) correctly computes an STR-IC-LCS.
\end{lemma}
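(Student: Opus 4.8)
The plan is to reduce the statement to three verifications matching the three stages of the algorithm, and then to close the argument by a two-sided bound on the value $\ell$ returned in line~23. First I would fix the intended invariants of the tables: the preprocessing (lines~1--4) makes $M^{\text A}[i]$ (resp.\ $M^{\text B}[j]$) the last index of the compact appearance of $P$ in $A$ (resp.\ $B$) starting at position $i$ (resp.\ $j$), and the two DP sweeps (lines~7--10 and 11--14) are the standard LCS recurrences, so that $F[i,j] = \mathit{LLCS}(a_1\ldots a_i, b_1\ldots b_j)$ and $R[i,j] = \mathit{LLCS}(a_i\ldots a_n, b_j\ldots b_m)$. These invariants are routine and I would dispatch them by the usual prefix/suffix induction for LCS and by the definition of a compact appearance. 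With the invariants in hand, writing $\mathrm{OPT}$ for the true STR-IC-LCS length, it suffices to prove $\ell \le \mathrm{OPT}$ and $\ell \ge \mathrm{OPT}$.

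For $\ell \le \mathrm{OPT}$ (soundness) I would show that every candidate value formed in line~18 is the length of an actual common subsequence of $A$ and $B$ that contains $P$ as a substring. Fix a match $(i,j)$ with $a_i = b_j = p_1$ for which both $M^{\text A}[i]$ and $M^{\text B}[j]$ exist. Because $a_i = b_j$, the recurrence forces $F[i,j] = F[i-1,j-1]+1$, so there is an optimal common subsequence $S^1$ of the prefixes ending with the match $(i,j)$, with $|S^1| = F[i,j]$; symmetrically there is an optimal common subsequence $S^2$ of the suffixes $a_{M^{\text A}[i]}\ldots a_n$, $b_{M^{\text B}[j]}\ldots b_m$ beginning with the match $(M^{\text A}[i], M^{\text B}[j])$, with $|S^2| = R[M^{\text A}[i], M^{\text B}[j]]$. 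Splicing $S^1$, the middle symbols $p_2\ldots p_{r-1}$ taken at the interior of the two compact appearances, and $S^2$ yields the sequence $S^1 p_2\ldots p_{r-1} S^2$; its $A$- and $B$-index sequences are strictly increasing (the $S^1$ indices end at $i$ and $j$, the interior $P$-indices lie strictly between the endpoints, and the $S^2$ indices begin at $M^{\text A}[i]$ and $M^{\text B}[j]$), so it is a common subsequence, and it contains $P = p_1(p_2\ldots p_{r-1})p_r$ as a substring since $p_1$ is the last symbol of $S^1$ and $p_r$ the first of $S^2$. Its length is $F[i,j] + (r-2) + R[M^{\text A}[i], M^{\text B}[j]]$, exactly the candidate value, so $\mathrm{OPT}$ dominates every candidate and hence $\ell$.

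For $\ell \ge \mathrm{OPT}$ (completeness) I would start from an optimal $C$ with index sequence $I$ and invoke Lemma~\ref{lem:i-prime} together with its stated mirror image for the $j$-component, which lets me assume the embedded copy of $P$ is a compact appearance in both $A$ and $B$, say starting at $(i_q, j_q)$ with $a_{i_q}=b_{j_q}=p_1$ and ending at $(i_{q+r-1}, j_{q+r-1}) = (M^{\text A}[i_q], M^{\text B}[j_q])$. Then the prefix $c_1\ldots c_q$ witnesses $F[i_q, j_q] \ge q$ and the suffix $c_{q+r-1}\ldots c_\ell$ witnesses $R[M^{\text A}[i_q], M^{\text B}[j_q]] \ge \ell - q - r + 2$, so the candidate examined at the match $(i_q, j_q)$ is at least $q + (\ell - q - r + 2) + (r-2) = \ell = \mathrm{OPT}$. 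Since the double loop in lines~16--18 scans all matches, $\ell \ge \mathrm{OPT}$, and the two bounds give $\ell = \mathrm{OPT}$; the traceback in lines~21--22 then reconstructs $S^1$ and $S^2$ realizing this length, which is the standard LCS backtracking.

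The main obstacle I anticipate is not the counting but the careful bookkeeping of the gluing and of the boundary conditions. I must make explicit that the term $F[i,j]+R[M^{\text A}[i],M^{\text B}[j]]+r-2$ is only ever formed at matches with $a_i=b_j=p_1$ and with both compact appearances defined (the pseudocode in line~18 writes only $a_i=b_j$, so I would state that $M^{\text A}, M^{\text B}$ are undefined, and the candidate skipped, otherwise), that the interior $P$-indices genuinely fall strictly between the endpoints so that no position is double-counted in the spliced sequence, and that the degenerate case $r=1$ (where $M^{\text A}[i]=i$ and the $-2$ correctly cancels the single shared match counted in both $F$ and $R$) remains consistent. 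Handling these boundary conditions cleanly is the delicate part; the algebra of the $+r-2$ correction and the two inequalities is otherwise straightforward.
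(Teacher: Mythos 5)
Your proof is correct and follows essentially the same route as the paper's: reduce to pairs of compact appearances via Lemma~\ref{lem:i-prime} and its mirror, split into two LCS subproblems answered by $F$ and $R$, and correct by $-2$ for the doubly counted endpoints of $P$. Your version is a careful two-sided expansion of the paper's four-sentence sketch, and your observation that line~18 should also require $a_i=b_j=p_1$ (so that $M^{\text{A}}[i]$ and $M^{\text{B}}[j]$ are defined) is a genuine detail the paper's pseudocode and proof both gloss over.
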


\begin{proof}
The algorithm considers all pairs of compact appearances of $P$ in $A$ and $B$.
Each such a pair divides the problem into two independent LCS length-computing subproblems.
According to the precomputed $F$ and $R$ matrices it is easy to solve these subproblems (lines 18--20) in constant time.
The length of an STR-IC-LCS must be a sum of the found lengths of LCSs and the constraint length subtracted by~2.
\end{proof}

\begin{lemma}
The worst-case time complexity of the proposed algorithm is $O(mn)$.
\end{lemma}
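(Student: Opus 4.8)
The plan is to bound the running time of each of the three stages of the algorithm separately and to verify that each is $O(mn)$; since the stages are executed one after another, their sum is then $O(mn)$ as well. Two of the three stages admit an immediate bound, so the only genuine work lies in accounting for the cost of the preprocessing.

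For the second stage (lines 5--14) I would first note that the boundary initialisation in lines 5--6 touches $O(n+m)$ cells. The forward matrix $F$ and the reverse matrix $R$ are then each filled by a pair of nested loops ranging over $i \in [1,n]$ and $j \in [1,m]$, and every cell is assigned in constant time from a fixed number of already-computed neighbouring cells. Hence each matrix costs $\Theta(mn)$ and the whole stage is $O(mn)$. The third stage (lines 15--23) is handled in the same way: the scan in lines 16--20 visits each pair $(i,j)$ exactly once and, exactly as in the correctness argument, evaluates the candidate length $F[i,j] + R[M^\text{A}[i], M^\text{B}[j]] + r - 2$ in constant time by table lookups into the precomputed $F$, $R$, $M^\text{A}$, and $M^\text{B}$ arrays; together with the two backtracking passes of lines 21--22, which are each $O(n+m)$, this stage is again $O(mn)$.

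The stage that needs real justification is the preprocessing (lines 1--4), because the pseudocode only states the values $M^\text{A}[i]$ and $M^\text{B}[j]$ and not how to compute them. Here I would argue that for a fixed start $i$ with $a_i = p_1$ the end of the compact appearance of $P$ is obtained by greedily matching $p_2, p_3, \ldots, p_r$ in turn, each being the nearest occurrence of the required symbol strictly to the right of the previous match. With an auxiliary next-occurrence table (for each position and each relevant symbol, the closest position to the right carrying it), every one of the $r-1$ greedy steps is a single lookup, so one $M^\text{A}[i]$ costs $O(r)$. Summing over the at most $n$ candidate positions gives $O(nr)$, and symmetrically $O(mr)$ for $B$; since $r \le m \le n$ by assumption, both are $O(mn)$, and the preprocessing is $O(mn)$.

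I expect the preprocessing bound to be the main obstacle, and it splits into two points. First, one must resist the naive approach of rescanning $A$ from scratch for each start, which need not stay within budget; the next-occurrence table is precisely what keeps the per-start cost at $O(r)$ rather than $O(n)$. Second, building that table in full costs $O(n\sigma)$ (resp.\ $O(m\sigma)$), which is absorbed into $O(mn)$ only under the paper's standing assumption of constant $\sigma$. For a general alphabet I would instead build the table only for the at most $r$ distinct symbols actually occurring in $P$, since no other symbol is ever queried; one right-to-left pass then constructs it in $O(nr)$ (resp.\ $O(mr)$), so the overall bound $O(nr)+O(mr)=O(mn)$ is preserved without any dependence on $\sigma$.
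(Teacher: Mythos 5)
Your proof is correct and follows the same stage-by-stage decomposition as the paper: preprocessing in $O((n+m)r)\subseteq O(mn)$ (using $r\le m\le n$), the two DP matrices in $O(mn)$, and the final scan with constant work per match in $O(mn)$. The only difference is that you supply an explicit $O(r)$-per-start method for computing $M^\text{A}$ and $M^\text{B}$ (and handle the alphabet-size issue), whereas the paper simply asserts the $O((n+m)r)$ preprocessing bound without detail.
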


\begin{proof}
The preprocessing stage can be done in $O((n+m)r)$ worst-case time.
The main stage consists of computation of two DP matrices which needs $O(mn)$ time.
In the final stage, the DP matrix is traversed and for each match a constant number of operations is performed, so these stages consumes $O(mn)$ time.
Summing these up gives $O(mn)$ time.
\end{proof}

\begin{lemma}
The space consumption of the algorithm is $O(mn)$.
\end{lemma}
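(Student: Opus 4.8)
The plan is to account for the space used by each object the algorithm maintains and then show that the two dynamic-programming matrices dominate every other contribution. First I would enumerate the stored data: the three input sequences $A$, $B$, and $P$; the two auxiliary arrays $M^\text{A}$ and $M^\text{B}$ computed in the preprocessing stage; the two DP matrices $F$ and $R$; a constant number of scalar variables ($\ell$, $i^\star$, $j^\star$, and the loop counters); and the two reconstructed subsequences $S^1$ and $S^2$ produced during backtracking.

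Next I would bound each term. The inputs occupy $O(n + m + r)$ space, which is $O(n)$ under the standing assumption $r \le m \le n$. The array $M^\text{A}$ is indexed by positions of $A$ and $M^\text{B}$ by positions of $B$, contributing $O(n)$ and $O(m)$ respectively. The subsequences $S^1$ and $S^2$ are subsequences of $A$ and $B$, so each has length at most $n$ and adds only $O(n)$. The scalars add $O(1)$. The dominant cost is that of $F$ and $R$: each is indexed by a pair $(i, j)$ with $i$ ranging over $[0, n+1]$ and $j$ over $[0, m+1]$, so each stores $(n+2)(m+2) = \Theta(mn)$ entries. Summing all contributions leaves $O(mn)$ as the total, which is the claim.

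There is no genuine obstacle here; the argument is a routine accounting in which the two DP matrices subsume every lower-order term. The only point I would flag is that the $O(mn)$ bound is forced precisely by storing $F$ and $R$ in full, which is exactly the feature a later discussion targets when seeking to reduce the space (for instance, by not retaining both matrices in their entirety).
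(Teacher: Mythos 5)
Your accounting is correct: the matrices $F$ and $R$ each hold $(n+2)(m+2) = \Theta(mn)$ entries, and every other object ($A$, $B$, $P$, $M^\text{A}$, $M^\text{B}$, the scalars, and the backtracked pieces $S^1$, $S^2$) is $O(n)$, so the total is $O(mn)$. The paper states this lemma without any proof at all, so your routine tally is exactly the argument that was left implicit, and your closing remark correctly anticipates the paper's Section on improvements, where the row-by-row computation of $F$ and $R$ is used to reduce the space when only the length (or only the match cells) must be retained.
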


\section{Improvements and extensions}
\label{sec:impr}
If one is interested only in the STR-IC-LCS length, it is easy to notice that $F$ and $R$ matrices can be computed row-by-row which means that only $O(m)$ words are necessary for them.
The values of $F$ and~$R$ for matches of symbols equal $p_1$ in~$F$ and $p_r$ in~$R$ must, however, be stored explicitly, so the space for them is $O(d^\star)$ (where $d^\star = O(mn)$ is the number of such matches).
This gives the total space $O(n+d^\star)$.
If also the subsequence is requested, the cells for all matches must be stored to allow backtracking, so the space is $O(n+d)$ (in the worst case $d = O(mn)$).

As the only cells that are necessary to be stored explicitly are those for matches, the Hunt--Szymanski method~\cite{Gus1997} can be used to speed up the computation of $F$ and $R$ matrices if the number of matches is small.
Therefore, the second stage can be completed in $O(d\log\log m + n)$ time if $\sigma = O(n)$ and $O(d\log\log n + n\log n)$ otherwise.
The time complexity of the final stage is~$O(d)$.
Adding the time for the preprocessing we obtain the worst case time complexities:
$O(d\log\log m + nr)$ for $\sigma = O(n)$ and $O(d\log\log n + n(r+\log n))$ otherwise.

The generalization of the LCS problem for many sequences is direct, but the time complexity of the exact algorithm computing the multidimensional DP matrix is $O(2^z n^z)$, where $z$ is the number of sequences of length~$O(n)$ each~\cite{Gus1997}.
It is easy to notice that according to Lemma~\ref{lem:i-prime} the STR-IC-LCS problem generalizes in the same way and the worst-case time complexity is also $O(2^z n^z)$.

\section{Conclusions}
\label{sec:concs}
We investigated the STR-IC-LCS problem introduced recently.
The fastest algorithms solving this problem known to date needed cubic time in case of two main and one constraining sequences.
Our algorithm is faster, as its time complexity is only quadratic.
Moreover, the algorithm uses an LCS-computation procedure as a component and any progresses in the LCS computation can improve the time complexities of the proposed method.

We also showed an irrecoverable flaw in~\cite{FFMR2010}, in which the algorithm of better than cubic time complexity was recently proposed, i.e., we proved this algorithm is supercubic in the worst case.

\subsection*{Acknowledgments}
The author thanks Szymon Grabowski for reading preliminary versions of the paper and suggesting improvements.


\end{document}